\newenvironment{algorithm}{\par\vspace*{-2mm}\quad\begin{minipage}[t]{0.925\textwidth}\begin{tabbing}%
99:x\=\quad\=\quad\=\quad\=\quad\=\quad\=\kill}{\end{tabbing}\end{minipage}\par\vspace*{-3mm}}
\newcommand{\ASSERT}{\textbf{assert}\ }
\newcommand{\IF}{\textbf{if}\ }
\newcommand{\ELSE}{\textbf{else}\ }
\newcommand{\AND}{\textbf{and}\ }
\newcommand{\sequence}[1]{\left\langle#1\right\rangle}
\newcommand{\set}[1]{\left\{#1\right\}}
\newcommand{\rank}{\mbox{$\mathit{rank}$}}
\newcommand{\Construct}{\mbox{$\mathit{construct}$}}
\newcommand{\Destroy}{\mbox{$\mathit{destroy}$}}
\newcommand{\Findmin}{\mbox{$\mathit{find}$\mbox{\rm -}$\mathit{min}$}}
\newcommand{\Insert}{\mbox{$\mathit{insert}$}}
\newcommand{\Deletemin}{\mbox{$\mathit{delete}$\mbox{\rm -}$\mathit{min}$}}
\newcommand{\Decrease}{\mbox{$\mathit{decrease}$}}
\newcommand{\Delete}{\mbox{$\mathit{delete}$}}
\newcommand{\Meld}{\mbox{$\mathit{meld}$}}
\newcommand{\Access}{\mbox{$\mathit{access}$}}
\newcommand{\Grow}{\mbox{$\mathit{grow}$}}
\newcommand{\Shrink}{\mbox{$\mathit{shrink}$}}
\newcommand{\circled}[1]{\raisebox{0.14ex}[0.14ex][0.14ex]{\textbigcircle}\hspace*{-0.82em}#1}
\begin{document}

\title{Worst-Case Optimal Priority Queues via Extended Regular Counters}

\author{Amr Elmasry \and Jyrki Katajainen}

\institute{Department of Computer Science, University of Copenhagen, Denmark}

\date{}
\maketitle 
\pagestyle{plain}
\pagenumbering{arabic}

\begin{abstract}
We consider the classical problem of representing a collection of
priority queues under the operations \Findmin{}, \Insert{},
\Decrease{}, \Meld{}, \Delete{}, and \Deletemin{}.  In the
comparison-based model, if the first four operations are to be
supported in constant time, the last two operations must
take at least logarithmic time. Brodal showed that his
worst-case efficient priority queues achieve these worst-case bounds.
Unfortunately, this data structure is involved and the time bounds hide large
constants. We describe a new variant of the worst-case efficient priority 
queues that relies on extended regular counters and provides the same 
asymptotic time and space bounds as the original. Due to the conceptual 
separation of the operations on regular counters and all other operations, 
our data structure is simpler and easier to describe and understand. 
Also, the constants in the time and space bounds are smaller.
In addition, we give an implementation of our structure on a pointer machine.  
For our pointer-machine implementation, \Decrease{} and \Meld{} are asymptotically 
slower and require $O(\lg\lg{n})$ worst-case time, where $n$ denotes the number 
of elements stored in the resulting priority queue.
\end{abstract}

\section{Introduction}

A priority queue is a fundamental data structure that maintains a set of elements
and supports the operations \Findmin{}, \Insert{}, \Decrease{}, \Delete{}, 
\Deletemin{}, and \Meld{}.  
In the comparison-based model, from the $\Omega(n \lg n)$ lower bound for
sorting it follows that, if \Insert{} can be performed in $o(\lg n)$
time, \Deletemin{} must take $\Omega(\lg n)$ time. Also, if \Meld{}
can be performed in $o(n)$ time, \Deletemin{} must take $\Omega(\lg
n)$ time \cite{Bro95}.  In addition, if \Findmin{} can be performed in
constant time, \Delete{} would not be asymptotically faster than
\Deletemin{}.  Based on these observations, a priority queue is said
to provide \emph{optimal time bounds} if it can support \Findmin{},
\Insert{}, \Decrease{}, and \Meld{} in constant time; and \Delete{} and 
\Deletemin{} in $O(\lg n)$ time, where $n$ denotes the number of elements stored.

After the introduction of binary heaps \cite{Wil64}, which are not
optimal with respect to all priority-queue operations, an important
turning point was when Fredman and Tarjan introduced Fibonacci heaps
\cite{FT87}. Fibonacci heaps provide optimal time bounds for all
standard operations in the amortized sense.  Driscoll et
al.~\cite{DGST88} introduced run-relaxed heaps, which have optimal
time bounds for all operations in the worst case, except for
\Meld{}. Later, Kaplan and Tarjan \cite{KT99} (see also \cite{KST02})
introduced fat heaps, which guarantee the same worst-case bounds as
run-relaxed heaps.  On the other side, Brodal \cite{Bro95} introduced
meldable priority queues, which provide the optimal worst-case time
bounds for all operations, except for \Decrease{}. Later, by
introducing several innovative ideas, Brodal \cite{Bro96} was able to
achieve the worst-case optimal time bounds for all operations.  Though
deep and involved, Brodal's data structure is complicated and should
just be taken as a proof of existence.
Kaplan et al.~\cite{KST02} said the following about Brodal's construction: 
``This data structure is very complicated however, much more complicated 
than Fibonacci heaps and the other meldable heap data structures''.  
To appreciate the conceptual simplicity of our construction, we urge the 
reader to scan through Brodal's paper \cite{Bro96}.  On the other side, 
we admit that while trying to simplify Brodal's construction, we had to 
stick with many of his innovative ideas.
We emphasize that in this paper we are mainly interested in the
theoretical performance of the priority queues discussed.  However,
some of our ideas may be of practical value.

Most priority queues with worst-case constant \Decrease{}, including
the one to be presented by us, rely on the concept of \emph{violation
  reductions}.  A \emph{violation} is a node that may, but not necessarily, violate the heap
order by being smaller than its parent.  When the number of violations
becomes high, a violation reduction is performed in constant time to
reduce the number of violations.

A \emph{numeral system} is a notation for representing numbers 
in a consistent manner using symbols---\emph{digits}. 
In addition, operations on these numbers, as
increments and decrements of a given digit, must obey the rules
governing the numeral system.  There is a connection between numeral systems and
data-structural design \cite{CK77,Vui78}.  The idea is to relate the
number of objects of a specific type in the data structure to the
value of a digit.  
A representation of a number that is subject to increments and
decrements of arbitrary digits can be called a \emph{counter}. A
\emph{regular counter} \cite{CK77} uses the digits $\set{0, 1, 2}$ in
the representation of a number and imposes the rule that between any
two $2$'s there must be a $0$. Such a counter allows for increments 
(and also decrements, under the assumption that the digit being decreased was
non-zero) of arbitrary digits with a constant number 
of digit changes per operation. For an integer $b \geq 2$, an 
\emph{extended regular $b$-ary counter} uses the digits $\{0,\ldots,b, b+1\}$ 
with the constraints that between any two $(b+1)$'s there is a digit other 
than $b$, and between any two $0$'s there is a digit other than $1$.  
An extended regular counter \cite{CK77,KST02} allows for increments and decrements 
of arbitrary digits with a constant number of digit changes per operation.

Kaplan and Tarjan \cite{KT99} (see also \cite{KST02}) introduced fat
heaps as a simplification of Brodal's worst-case optimal priority
queues, but these are not meldable in $O(1)$ worst-case time. In fat
heaps, an extended regular ternary counter is used to maintain the
trees in each heap and an extended regular binary counter to maintain
the violation nodes. In this paper, we describe yet another
simplification of Brodal's construction.  One of the key ingredients 
in our construction is the utilization of extended regular binary
counters.  Throughout our explanation of the data structure, in
contrary to \cite{Bro96}, we distinguish between the operations of
the numeral system and other priority-queue operations.

Our motives for writing this paper were the following. 
\begin{enumerate}
\item We simplify Brodal's construction and devise a priority queue
  that provides optimal worst-case bounds for all operations (\S $2$ and \S $3$). 
  The gap between the description complexity of worst-case optimal priority queues 
  and binary heaps \cite{Wil64} is huge.  One of our motivations was to narrow this gap.
\item We describe a strikingly simple implementation of the extended regular counters (\S 4).  
  In spite of their importance for many applications, the existing descriptions \cite{CK77,KST02}
  for such implementations are sketchy and incomplete. 
\item With this paper, we complete our research program on the
  comparison complexity of priority-queue operations. All the obtained
  results are summarized in Table~\ref{table:results}.  Elsewhere, it
  has been documented that, in practice, worst-case efficient priority
  queues are often outperformed by simpler non-optimal priority queues.
  Due to the involved constant factors in the number of element comparisons, 
  this is particularly true if one aims at developing priority queues that 
  achieve optimal time bounds for all the standard operations.
\end{enumerate} 

\begin{table}[tb]
\caption{The best-known worst-case comparison complexity of different 
	priority-queue operations. The worst-case performance of \Delete{} is the same as 
	that of \Deletemin{}. 
	Using '\textbf{--}' indicates that the operation is not supported optimally.
\label{table:results}}
  
\begin{center}
\begin{tabular}{@{}|p{5.5cm}|c|c|c|c|c|@{}}
\hline
\textbf{Data structure} 
& \Findmin{}
& \Insert{}
& \Decrease{}
& \Meld{}
& \Deletemin{}\\
\hline
Multipartite priority queues \cite{EJK08b}
& $O(1)$
& $O(1)$
& \textbf{--}
& \textbf{--}
& $\lg n + O(1)$\\
\hline
Two-tier relaxed heaps \cite{EJK08a}
& $O(1)$
& $O(1)$
& $O(1)$
& \textbf{--}
& $\lg n + O(\lg\lg n)$\\
\hline
Meldable priority queues \cite{EJK10b}
& $O(1)$
& $O(1)$
& \textbf{--}
& $O(1)$
& $2\lg n + O(1)$\\
\hline
Optimal priority queues [this paper]
& $O(1)$
& $O(1)$
& $O(1)$
& $O(1)$
& $\approx 70\lg n$\\
\hline
\end{tabular}
\end{center}
\vspace{-.2in}
\end{table}

It is a long-standing open issue how to implement a heap on a
pointer machine such that
all operations are performed in optimal time bounds.
A Fibonacci heap is known to achieve the optimal bounds in the
amortized sense on a pointer machine \cite{KT08}, a fat heap in 
the worst case provided that \Meld{} is not supported \cite{KT99},
and the meldable heap in \cite{Bro95} provided that \Decrease{} is supported in $O(\lg n)$ time.  
In this paper, we offer a pointer-machine implementation for which 
\Decrease{} and \Meld{} are supported in $O(\lg \lg n)$ worst-case time.

\section{Description of the Data Structure}

Let us begin with a high-level description of the data structure.
The overall construction is similar to that used in
\cite{Bro96}.  However, the use of extended regular counters is new.
In accordance, the rank rules, and hence the structure, are different from \cite{Bro96}.
The set of violation-reduction routines are in turn new.

\begin{figure}[!tb]
\begin{center}
\input{x.pdf}
\end{center}
\begin{description}
\item[{\normalfont\circled{1}}~\,] If $t_2$ exists, $\mathit{rank}(t_1) <
  \mathit{rank}(t_2)$.

\item[{\normalfont\circled{2}}~\,] An
  extended regular binary counter is used to keep track of the children
  of each of $t_1$ and $t_2$.

\item[{\normalfont\circled{3}}~\,] For each node, including $t_1$ and
  $t_2$, its rank sequence $\sequence{d_0,d_1,\ldots,d_{\ell -1}}$
  must obey the following rules: for all $i\in\set{0,1,\ldots,\ell-1}$
  (i) $d_i \leq 3$ and (ii) if $d_i \neq 0$, then $d_{i-1} \neq 0$ or
  $d_i \geq 2$ or $d_{i+1} \neq 0$.

\item[{\normalfont\circled{4}}~\,] Each node guards a list of violations;
  $t_1$ guards two: one containing active violations and another
  containing inactive violations.

\item[{\normalfont\circled{5}}~\,] The active violations of $t_1$ are kept 
   in a violation structure consisting of a resizable array, in which the $r$th entry 
   refers to violations of rank $r$, and a doubly-linked 
   list linking the entries of the array that have more than two violations.
\end{description}
\vspace{-.2in}
\caption{Illustrating the data structure in an abstract form.\label{fig:abstract}}
\vspace{-.2in}
\end{figure}

For an illustration of the data structure, see Fig.~\ref{fig:abstract}.
Each priority queue is composed of two multi-way trees $T_1$ and
$T_2$, with roots $t_1$ and $t_2$ respectively ($T_2$ can be empty).   
The atomic components of the priority queues are nodes, each storing a single element.  
The \emph{rank} of a node $x$, denoted $\rank{}(x)$, is an integer 
that is logarithmic in the size (number of nodes) of the subtree rooted at $x$. 
The rank of a subtree is the rank of its root.
The trees $T_1$ and $T_2$ are heap ordered, except for some violations; 
but, the element at $t_1$ is the minimum among the elements of the
priority queue. If $t_2$ exists, the rank of $t_1$ is less than that of $t_2$, 
i.e.~$\rank{}(t_1) < \rank{}(t_2)$.  

With each priority queue, a \emph{violation structure} is maintained;
an idea that has been used before in \cite{Bro96,DGST88,KST02}. 
This structure is composed of a resizable array, called \emph{violation array}, in which the $r$th entry 
refers to violations of rank $r$, and a doubly-linked list, which links 
the entries of the array that have more than two violations each.  
Similar to \cite{Bro96}, each violation is guarded by a node that has a smaller element. 
Hence, when a minimum is deleted, not all the violations need to be considered 
as the new minimum candidates.  
To adjust this idea for our purpose, the violations guarded by $t_1$ 
are divided into two groups: the so-called \emph{active violations} are 
used to perform violation reductions, and the so-called \emph{inactive violations} 
are violations whose ranks were larger than the size of the violation array 
at the time when the violation occurred.  
All the active violations guarded by $t_1$, besides being kept in a doubly-linked list, 
are also kept in the violation structure; all inactive violations are kept in a doubly-linked list. 

Any node $x$ other than $t_1$ only guards a single 
list of $O(\lg n)$ violations. These are the violations that took place while 
node $x$ stored the minimum of its priority queue. Such violations must be 
tackled once the element associated with node $x$ is deleted.  
In particular, in the whole data structure there can be up to $O(n)$ violations, 
not $O(\lg n)$ violations as in run-relaxed heaps and fat heaps. 

When a violation is introduced, there are two phases for handling it accordingly to whether the size of the violation array is as big as the largest rank or not. During the first phase, the following actions are taken.
1) The new violation is added to the list of inactive violations.
2) The violation array is extended by a constant number of entries.
During the second phase, the following actions are taken.
1) The new violation is added to the list of active violations and to the violation structure.
2) A violation reduction is performed if possible.

The children of a node are stored in a doubly-linked list in
non-decreasing rank order. In addition to an element, each node stores
its rank and six pointers pointing to: the left sibling, the right
sibling (the parent if no right sibling exists), the last child (the rightmost child), 
the head of the guarded violation list, and the predecessor and the
successor in the violation list where the node may be in. 
To decide whether the right-sibling pointer of a node $x$ points to a sibling or a
parent, we locate the node $y$ pointed to by the right-sibling pointer of $x$ and check if the
last-child pointer of $y$ points back to $x$.

Next, we state the rank rules implying the structure of the multi-way trees:

\begin{description}
\item[(a)] The rank of a node is one more than the rank of its last
  child.  The rank of a node that has no children is $0$.

\item[(b)] The \emph{rank sequence} of a node specifies the
  multiplicities of the ranks of its children. If the rank sequence
  has a digit $d_r$, the node has $d_r$ children of rank $r$. The rank
  sequences of $t_1$ and $t_2$ are maintained in a way that allows
  adding and removing an arbitrary subtree of a given rank in constant
  time.  This is done by having the rank sequences of those nodes obey
  the rules of a numeral system that allows increments and decrements
  of arbitrary digits with a constant number of digit changes per
  operation.  When we add a subtree or remove a subtree from
  \emph{below} $t_1$ or $t_2$, we also do the necessary actions to
  reestablish the constraints imposed by the numeral system.

\item[(c)] Consider a node $x$ that is not $t_1$, $t_2$, or a child of
  $t_1$ or $t_2$.  If the rank of $x$ is $r$, there must exist at
  least one sibling of $x$ whose rank is $r-1$, $r$ or $r+1$.  Note
  that this is a relaxation to the rules applied to the rank sequences
  of $t_1$ and $t_2$, for which the same rule also applies
  \cite{CK77}.  In addition, the number of siblings having the same
  rank is upper bounded by at most three.
\end{description}

Among the children of a node, there are consecutive subsequences of
nodes with consecutive, and possibly equal, ranks.  We call each
maximal subsequence of such nodes a \emph{group}.  By our rank rules,
a group has at least two \emph{members}.  The difference between the
rank of a member of a group and that of another group is at least two,
otherwise both constitute the same group.

\begin{lemma}
\label{log}
The rank and the number of children of any node in our data structure is
$O(\lg{n})$, where $n$ is the size of the priority queue.
\end{lemma}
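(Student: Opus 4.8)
The plan is to establish two inequalities linking the rank $r$ of a node to the size $s(r)$ of the smallest possible subtree of that rank, and to conclude that $s(r)$ grows at least geometrically, so that $r = O(\lg n)$; the bound on the number of children then follows because a node of rank $r$ has children only of ranks $0,1,\ldots,r-1$, each rank occurring with multiplicity at most three.

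First I would set up the recurrence. Let $s(r)$ denote the minimum number of nodes in any subtree of rank $r$ that can occur in the data structure, minimized over all nodes (the worst case being an ordinary node, not $t_1$, $t_2$, or a child of them, since rule (c) is the weakest constraint). By rank rule (a), a node of rank $r$ has a last child of rank $r-1$, contributing $s(r-1)$ nodes. The key extra leverage comes from rule (c) together with the group structure: the last child of rank $r-1$ belongs to a group with at least two members, so there is a sibling of rank $r-2$, $r-1$, or $r$ — and since children have non-decreasing ranks and the last child already has the maximum rank $r-1$, that sibling has rank $r-2$ or $r-1$. In the worst case it has rank $r-2$, contributing at least $s(r-2)$ further nodes. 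Hence $s(r) \ge 1 + s(r-1) + s(r-2)$ for $r \ge 2$, with $s(0) = 1$ and $s(1) \ge 2$ (a rank-$1$ node has a rank-$0$ last child). This is a Fibonacci-type recurrence, so $s(r) \ge F_{r+2}$ (or some explicit $\phi^{\Theta(r)}$ bound), giving $s(r) = \Omega(c^r)$ for a constant $c > 1$.

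From $n \ge s(r) = \Omega(c^r)$ we immediately get $r = O(\lg n)$ for every node, which is the first claim. For the second claim, a node $x$ of rank $r$ has children whose ranks lie in $\{0, 1, \ldots, r-1\}$, and by rule (c) (for ordinary nodes) or by the numeral-system digit bound $d_i \le 3$ in rule (b) (for $t_1$, $t_2$), at most three children share any given rank; hence $x$ has at most $3r = O(\lg n)$ children. One should double-check the boundary cases: $t_1$ and $t_2$ obey rule (b), whose digit constraint $d_i \le 3$ is at least as strong as rule (c) for the purpose of counting, and the children of $t_1$ and $t_2$ obey rule (c) by assumption, so the recurrence argument applies to them as well (in fact the rank-sequence rule (b) with the "between any two $0$'s there is a digit other than $1$" clause yields an even better bound, but the crude argument suffices).

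The main obstacle is justifying the $s(r-2)$ term cleanly — i.e., arguing that the guaranteed sibling required by rule (c) is genuinely a \emph{distinct} subtree not already counted in $s(r-1)$, and that its rank is at least $r-2$ rather than something smaller. This requires using the definition of a group carefully: the last child of $x$ has rank $r-1$, it lies in a group of at least two members with consecutive ranks, and the next member down therefore has rank $r-1$ or $r-2$; either way it contributes at least $s(r-2)$ distinct nodes. Once this is pinned down, the rest is a routine induction showing $s(r)$ dominates the Fibonacci numbers, and the logarithmic bounds follow.
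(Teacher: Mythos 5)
Your proposal is correct and follows essentially the same route as the paper: a Fibonacci-type lower bound on the minimum size of a rank-$r$ subtree, obtained from the rank-$(r-1)$ last child plus a guaranteed sibling of rank at least $r-2$, followed by the observation that at most three children share a rank. Your extra care in justifying the sibling's existence via the group structure (and the separate check for $t_1$, $t_2$ via the rank-sequence rule) only makes explicit what the paper states as ``our rank rules imply.''
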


\begin{proof}
We prove by induction that the size of a subtree of rank $r$ is at
least $F_r$, where $F_r$ is the $r$th Fibonacci number. The claim
is clearly true for $r\in\set{0,1}$.  Consider a node $x$ of rank $r
\geq 2$, and assume that the claim holds for all values smaller than
$r$. The last child of $x$ has rank $r-1$. Our rank rules imply
that there is another child of rank at least $r-2$. Using the
induction hypothesis, the size of these two subtrees is at least
$F_{r-1}$ and $F_{r-2}$.  Then, the size of the subtree rooted
at $x$ is at least $F_{r-1} + F_{r-2} = F_r$.  
Hence, the maximum rank of a node is $1.44 \lg n$. 
By the rank rules, every node has at most three children of the same rank. 
It follows that the number of children per node is $O(\lg n)$.
\qed
\end{proof}

Two trees of rank $r$ can be \emph{joined} by making the tree whose
root has the larger value the last subtree of the other. The rank of
the resulting tree is $r+1$.  Alternatively, a tree rooted at a node
$x$ of rank $r+1$ can be \emph{split} by detaching its last subtree.
If the last group among the children of $x$ now has one member, the
subtree rooted at this member is also detached.  The rank of $x$
becomes one more than the rank of its current last child. In
accordance, two or three trees result from a {\it split}; among them, one
has rank $r$ and another has rank $r-1$, $r$, or $r+1$.  The {\it join} and
{\it split} operations are used to maintain the constraints imposed by the
numeral system.  Note that one element comparison is performed with
the {\it join} operation, while the {\it split} operation involves no element
comparisons.

\section{Priority-Queue Operations}

One complication, also encountered in \cite{Bro96}, 
is that not all violations can be recorded in the violation structure.
The reason is that, after a \Meld{}, the violation array may be 
too small when the old $t_1$ with the
smaller rank becomes the new $t_1$ of the melded priority queue.
Assume that we have a violation array of size $s$ associated with
$t_1$.  The priority queue may contain nodes of rank $r \geq s$.
Hence, violations of rank $r$ cannot be recorded in the
array.  We denote the violations that are recorded in the violation
array as active violations and those that are only in the
violation list as inactive violations.  Violation reductions
are performed on active violations whenever possible.  Throughout the
lifetime of $t_1$, the array is incrementally extended by the upcoming
priority-queue operations until its size reaches the largest rank.
Once the array is large enough, no new inactive violations are
created.  Since each priority-queue operation can only create a
constant number of violations, the number of inactive violations is
$O(\lg n)$.

The violation structures can be realized by letting each node have one
pointer to its violation list, and two pointers to its predecessor and
successor in the violation list where the node itself may be in. By
maintaining all active violations of the same rank consecutively in
the violation list of $t_1$, the violation array can just have a
pointer to the first active violation of any particular rank.

In connection with every \Decrease{} or \Meld{}, if $T_2$ exists, a constant number of subtrees 
rooted at the children of $t_2$ are removed from below $t_2$ and added below $t_1$.  
Once $\rank{}(t_1)\geq \rank{}(t_2)$, the whole tree $T_2$ is added below $t_1$.
To be able to move all subtrees from below $t_2$ and
finish the job on time, we should always pick a subtree
from below $t_2$ whose rank equals the current rank of $t_1$.  

The priority queue operations aim at maintaining the following invariants:

\begin{enumerate}
\vspace{-.1in}
\item
The minimum is associated with $t_1$.  
\item
The second-smallest element is either stored at $t_2$, at one of the children of
$t_1$, or at one of the violation nodes associated with $t_1$.
\item
The number of entries in the violation list of a node is $O(\lg n)$,
assuming that the priority queue that contains this node has $n$ elements.
\vspace{-.05in}
\end{enumerate}

We are now ready to describe how the priority-queue operations are realized.
\begin{description}
\vspace{-.05in}
\item[\Findmin{}$(Q)$:] Following the first invariant, the minimum is at $t_1$.

\item[\Insert{}$(Q,x)$:] A new node $x$ is given with the value
  $e$. If $e$ is smaller than the value of $t_1$, the
  roles of $x$ and $t_1$ are exchanged by swapping the two nodes. The
  node $x$ is then added below $t_1$.

\item[\Meld{}$(Q, Q')$:] This operation involves at most four trees $T_1$, $T_2$, $T_1'$, and $T_2'$,
  two for each priority queue; their roots are named correspondingly using lower-case letters. 
  Assume without loss of generality that $value(t_1) \leq value(t'_1)$.
  The tree $T_1$ becomes the first tree of the melded priority queue.  
  The violation array of $t'_1$ is dismissed.  
  If $T_1$ has the maximum rank, the other trees are added below $t_1$ 
  resulting in no second tree for the melded priority queue.  
  Otherwise, the tree with the maximum rank among $T'_1, T_2$, and $T'_2$
  becomes the second tree of the melded priority queue. The remaining trees are
  added below the root of this tree, and
  the roots of the added trees are made violating. To keep the number
  of active violations within the threshold, two violation
  reductions are performed if possible. Finally, the regular counters
  that are no longer corresponding to roots are dismissed.

\item[\Decrease{}$(Q,x,e)$:] The element at node $x$ is replaced by
  element $e$. If $e$ is smaller than the element at $t_1$, the roles
  of $x$ and $t_1$ are exchanged by swapping the two nodes (but not
  their violation lists).  If $x$ is either $t_1$, $t_2$, or a
  child of $t_1$, stop.  Otherwise, $x$ is denoted violating and added
  to the violation structure of $t_1$; if $x$ was already violating,
  it is removed from the violation structure where it was in.  To keep
  the number of active violations within the threshold, a violation
  reduction is performed if possible.

\item[\Deletemin{}$(Q)$:] By the first invariant, the minimum is at $t_1$.
	The node $t_2$ and all the subtrees rooted at its 
	children are added below $t_1$. This is accompanied with extending the violation 
	array of $T_1$, and dismissing the regular counter of $T_2$.  
  By the second invariant, the new minimum is now stored at one of the children
  or violation nodes of $t_1$.  By Lemma \ref{log} and the third
  invariant, the number of minimum candidates is $O(\lg{n})$.  Let $x$ be
  the node with the new minimum. If $x$ is among the violation nodes
  of $t_1$, a tree that has the same rank as $x$ is removed from below $t_1$,
  its root is made violating, and is attached in place of the subtree rooted at $x$.
  If $x$ is among the children of $t_1$, the tree rooted at $x$ is removed from below $t_1$. 
  The inactive violations of $t_1$ are recorded in the violation array. The violations of $x$
  are also recorded in the array.  
  The violation list of $t_1$ is appended to that of $x$. 
  The node $t_1$ is then deleted and replaced by the node $x$.
  The old subtrees of $x$ are added, one by one, below the new root.  
  To keep the number of violations within the threshold, violation reductions are performed as many times as possible.
  By the third invariant, at most $O(\lg{n})$ violation reductions are to be performed.

\item[\Delete{}$(Q,x)$:] The node $x$ is swapped with $t_1$, which is then made violating. 
	To remove the current root $x$, the same actions are performed as in \Deletemin{}.
\vspace{-.2in}	
\end{description}
  
In our description, we assume that it is possible to dismiss an array
in constant time.  We also assume that the doubly-linked list indicating
the ranks where a reduction is possible is realized inside the
violation array, and that a regular counter is compactly
represented within an array.  Hence, the only garbage created by
freeing a violation structure or a regular counter is an array of
pointers.  If it is not possible to dismiss an array in constant
time, we rely on incremental garbage collection.  In such case, to
dismiss a violation structure or a regular counter, we add it
to the garbage pile, and release a constant amount of garbage in
connection with every priority-queue operation.  It is not hard to
prove by induction that the sum of the sizes of the violation
structures, the regular counters, and the garbage pile
remains linear in the size of the priority queue.

\subsection*{Violation Reductions}

Each time when new violations are introduced, we perform equally many
violation reductions whenever possible. A violation reduction is
possible if there exists a rank recording at least three active
violations. This will fix the maximum number of active violations at
$O(\lg n)$. Our violation reductions diminish the number of violations
by either getting rid of one violation, getting rid of two and
introducing one new violation, or getting rid of three and introducing
two new violations.  We use the powerful tool that the rank sequence
of $t_1$ obey the rules of a numeral system, which allows adding a new
subtree or removing an existing one from below $t_1$ in worst-case
constant time.  When a subtree with a violating root is added below
$t_1$, its root is no longer violating.

One consequence of allowing $O(n)$ violations is that we cannot use
the violation reductions exactly in the form described for run-relaxed
heaps or fat heaps.  When applying the cleaning transformation
to a node of rank $r$ (see \cite{DGST88} for the details), we cannot
any more be sure that its sibling of rank $r+1$ is not
violating, simply because there can be violations that are guarded by other nodes.  
We then have to avoid executing the cleaning transformation by the violation reductions.
 
Let $x_1$, $x_2$, and $x_3$ be three violations of the same rank $r$. 
We distinguish several cases to be considered when applying our reductions:

\begin{description}
\vspace{-.1in}
\item[Case 1.] If, for some $i \in\{1,2,3\}$, $x_i$ is neither the last nor the 
	second-last child, detach the subtree rooted at $x_i$ from its parent and add it below $t_1$. 
	The node $x_i$ will not be anymore violating. The detachment of $x_i$ may leave one or two groups with one member (but not the last group).
	If this happens, the subtree rooted at each of these singleton members is then detached and added below $t_1$.
	(We can still detach the subtree of $x_i$ even when $x_i$ is one of the last two children of its parent,
	conditioned that such detachment leaves this last group with at least two members and retains the rank of the parent.) 
\end{description}

For the remaining cases, after checking Case $1$, we assume that each of $x_1,x_2$, and $x_3$ 
is either the last or the second-last child of its parent.
Let $s_1$, $s_2$, and $s_3$ be the other member of the last two
members of the groups of $x_1$, $x_2$, and $x_3$, respectively. Let
$p_1$, $p_2$, and $p_3$ be the parents of $x_1$, $x_2$, and $x_3$,
respectively.  Assume without loss of generality that $\rank{}(s_1)
\geq \rank{}(s_2) \geq \rank{}(s_3)$.

\begin{description}
\vspace{-.1in}
\item[Case 2.] 
$\rank{}(s_1) = \rank{}(s_2) = r+1$, or $\rank{}(s_1) = \rank{}(s_2) = r$, 
or $\rank{}(s_1) = r$ and $\rank{}(s_2) = r-1$: 

\begin{description}
\item[(a)] $value(p_1) \leq value(p_2)$: Detach the subtrees rooted at $x_1$ and $x_2$, and add
  them below $t_1$; this reduces the number of violations by two. 
  Detach the subtree rooted at $s_2$ and attach it below $p_1$ (retain rank order); this does not introduce any new violations.
  Detach the subtree rooted at the last child of $p_2$ if it is a singleton member, detach the remainder of the subtree rooted at $p_2$, 
  change the rank of $p_2$ to one more than that of its current last child, 
  and add the resulting subtrees below $t_1$. Remove a subtree with the old rank
  of $p_2$ from below $t_1$, make its root violating, and attach it in the old
  place of the subtree rooted at $p_2$.
\item[(b)] $value(p_2) < value(p_1)$: Change the roles of $x_1,s_1,p_1$
  and $x_2,s_2,p_2$, and apply the same actions as in Case 2(a). 
\end{description}

\item[Case 3.]
$\rank{}(s_1) = r+1$ and $\rank{}(s_2) = r$: 

\begin{description}
\item[(a)] $value(p_1) \leq value(p_2)$: Apply the same actions as in Case 2(a).
\item[(b)] $value(p_2) < value(p_1)$: Detach the subtrees rooted at $x_1$ and $x_2$, and add
  them below $t_1$; this reduces the number of violations by two. 
  Detach the subtree rooted at $s_2$ if it becomes a singleton member of its group, 
  detach the remainder of the subtree rooted at $p_2$, 
  change the rank of $p_2$ to one more than that of its current last child, 
  and add the resulting subtrees below $t_1$.
  Detach the subtree rooted at $s_1$, and attach it in the old place of the subtree
  rooted at $p_2$; this does not introduce any new violations. 
  Detach the subtree rooted at the current last child of $p_1$ if such child becomes a singleton member of its group, 
  detach the remainder of the subtree rooted at $p_1$, 
  change the rank of $p_1$ to one more than that of its current last child, 
  and add the resulting subtrees below $t_1$.
  Remove a subtree of rank $r+2$ from below $t_1$, make its root violating,
  and attach it in the old place of the subtree rooted at $p_1$.
\end{description}

\item[Case 4.] $\rank{}(s_1) = \rank{}(s_2) = \rank{}(s_3) = r-1$:

Assume without loss of generality that $value(p_1) \leq
\min\{value(p_2),value(p_3)\}$.  Detach the subtrees of $x_1, x_2$, and $x_3$, and add them
below $t_1$; this reduces the number of violations by three.
Detach the subtrees of $s_2$ and $s_3$, and join them to form a
subtree of rank $r$. Attach the resulting subtree in place of
$x_1$; this does not introduce any new violations. 
Detach the subtree rooted at the current last child of each of $p_2$ and $p_3$
if such child becomes a singleton member of its group,
detach the remainder of the subtrees rooted at $p_2$ and $p_3$, change the rank of each
of $p_2$ and $p_3$ to one more than that of its current last child,
and add the resulting subtrees below $t_1$. Remove two subtrees of rank $r+1$
from below $t_1$, make the roots of each of them violating, and attach them in the old
places of the subtrees rooted at $p_2$ and $p_3$.

\item[Case 5.] $\rank{}(s_1) = r+1$ and $\rank{}(s_2) = \rank{}(s_3) = r-1$:
\begin{description}
\item[(a)] $value(p_1) \leq \min\{value(p_2),value(p_3)\}$: Apply same actions as Case 4.
\item[(b)] $value(p_2) < value(p_1)$: Apply the same actions as in Case 3(b).
\item[(c)] $value(p_3) < value(p_1)$: Change the roles of
  $x_2,s_2,p_2$ to $x_3,s_3,p_3$, and apply the same actions as in Case 3(b).
\end{description}
\end{description}

The following properties are the keys for the success of our violation-reduction routines.
1) Since there is no tree $T_2$ when a violation reduction takes place, 
the rank of $t_1$ will be the maximum rank among all other nodes. 
In accordance, we can remove a subtree of any specified rank from below $t_1$.
2) Since $t_1$ has the minimum value of the priority queue, its children 
are not violating. In accordance, we can add a subtree below $t_1$ and ensure that
its root is not violating.

\section{Extended Regular Binary Counters}

An extended regular binary counter represents a non-negative integer
$n$ as a string $\sequence{d_0,d_1,\ldots,d_{\ell-1}}$ of digits,
least-sig\-nifi\-cant digit first, such that $d_i \in \{0,1,2,3\}$,
$d_{\ell-1} \neq 0$, and $n =\sum _i d_i \cdot 2^i$.  The main
constraint is that every $3$ is preceded by a $0$ or $1$ possibly
having any number of $2$'s in between, and that every $0$ is preceded
by a $2$ or $3$ possibly having any number of $1$'s in between.  This
constraint is stricter than the standard one, which allows the first
$3$ and the first $0$ to come after any (or even no) digits.  An
extended regular counter \cite{CK77,KST02} supports the increments and
decrements of arbitrary digits with a constant number of digit changes
per operation.

Brodal \cite{Bro96} showed how, what he calls a \emph{guide},
can realize a regular binary counter (the digit set has three symbols
and the counter supports {\it increments} in constant time).  To support
{\it decrements} in constant time as well, he suggested to couple two such
guides back to back.  We show how to implement an extended regular
binary counter more efficiently using much simpler ideas. The
construction described here was sketched in \cite{KST02}; our
description is more detailed.

We partition any sequence into \emph{blocks} of consecutive digits,
and digits that are in no blocks.  We have two categories of blocks:
blocks that end with a $3$ are of the forms $12^*3, 02^*3$, and blocks
that end with a $0$ are of the forms $21^*0, 31^*0$ (we assume that
least-significant digits come first, and $^{*}$ means zero or more
repetitions).  We call the last digit of a block the
\emph{distinguishing digit}, and the other digits of the block the
\emph{members}.  Note that the distinguishing digit of a block may be
the first member of a block from the other category.

To efficiently implement \emph{increment} and \emph{decrement} operations, a
\emph{fix} is performed at most twice per operation.  A fix does not
change the value of a number.  When a digit that is a member of a
block is increased or decreased by one, we may need to perform a fix
on the distinguishing digit of its block.  We associate a forward
pointer $f_i$ with every digit $d_i$, and maintain the invariant that
all the members of the same block point to the distinguishing digit of
that block. The forward pointers of the digits that are not members of 
a block point to an arbitrary digit.  Starting from any member, 
we can access the distinguishing digit of its block, 
and hence perform the required fix, in constant time.
 
As a result of an \emph{increment} or a \emph{decrement}, the 
following properties make such a construction possible.
\begin{itemize}
\vspace{-.1in}
\item A block may only extend from the beginning and by only one digit.
In accordance, the forward pointer of this new member inherits the same value as the forward pointer of the following digit.
\item A newly created block will have only two digits.
In accordance, the forward pointer of the first digit is made to point to the other digit.
\item A fix that is performed unnecessarily is not harmful (keeps the representation regular). 
In accordance, if a block is destroyed when fixing its distinguishing digit, no changes are done with the forward pointers.
\vspace{-.1in}
\end{itemize}
A string of length zero represents the number $0$. In our pseudo-code,
the change in the length of the representation is implicit; 
the key observation is that the length can only increase by at most one digit 
with an \emph{increment} and decrease by at most two digits with a \emph{decrement}. 

For $d_j=3$, a \emph{fix-carry} for $d_j$ is performed as follows: 
\begin{center}
\begin{algorithm}
\rule[6pt]{\textwidth}{0.45pt}\\[-6pt]
\textbf{Algorithm} \textit{fix-carry}($\sequence{d_0,d_1,\ldots,d_{\ell-1}}$, $j$)\\
\rule[6pt]{\textwidth}{0.45pt}\\[-6pt]
1:\>\ASSERT $0 \leq j \leq \ell-1$ \AND $d_j=3$\\
2:\>$d_j \leftarrow 1$\\
3:\>increase $d_{j+1}$ by $1$\\
4:\>\IF $d_{j+1} = 3$\\ 
5:\>\> $f_j \leftarrow j+1$ ~~// a new block of two digits\\
6:\>\ELSE\\
7:\>\>$f_j \leftarrow f_{j+1}$ ~~// extending a \emph{possible} block from the beginning\\ 
\rule[6pt]{\textwidth}{0.45pt}
\end{algorithm}
\end{center}
As a result of a {\it fix-carry} the value of a number does not change.
Accompanying a {\it fix-carry}, two digits are changed. In the corresponding
data structure, this results in performing a \emph{join}, which
involves one element comparison.

The following pseudo-code summarizes the actions needed to increase
the $i$th digit of a number by one.

\begin{center}
\begin{algorithm}
\rule[6pt]{\textwidth}{0.45pt}\\[-6pt]
\textbf{Algorithm} \textit{increment}$(\sequence{d_0,d_1,\ldots,d_{\ell-1}}, i)$\\
\rule[6pt]{\textwidth}{0.45pt}\\[-6pt]
1:\>\ASSERT $0 \leq i \leq \ell$\\
2:\>\IF $d_i = 3$\\
3:\>\>\textit{fix-carry}($\sequence{d_0,d_1,\ldots,d_{\ell-1}}$, $i$)
~~// either this fix is executed\\
6:\>$j \leftarrow f_i$\\
7:\>\IF $j \leq \ell-1$ \AND $d_j = 3$\\
8:\>\>\textit{fix-carry}($\sequence{d_0,d_1,\ldots,d_{\ell-1}}$, $j$)\\
8:\>increase $d_i$ by $1$\\
10:\>\IF $d_i = 3$\\
11:\>\> \textit{fix-carry}($\sequence{d_0,d_1,\ldots,d_{\ell-1}}$,
$i$) ~~// or this fix\\
\rule[6pt]{\textwidth}{0.45pt}
\end{algorithm}
\end{center}

\noindent
Using case analysis, it is not hard to verify that this operation maintains the
regularity of the representation.

For $d_j=0$, a \emph{fix-borrow} for $d_j$ is performed as follows: 

\begin{center}
\begin{algorithm}
\rule[6pt]{\textwidth}{0.45pt}\\[-6pt]
\textbf{Algorithm} \textit{fix-borrow}($\sequence{d_0,d_1,\ldots,d_{\ell-1}}$, $j$)\\
\rule[6pt]{\textwidth}{0.45pt}\\[-6pt]
1:\>\ASSERT $0 \leq j < \ell-1$ \AND $d_j=0$\\
2:\>$d_j \leftarrow 2$\\
3:\>decrease $d_{j+1}$ by $1$\\
4:\>\IF $d_{j+1} = 0$\\ 
5:\>\> $f_j \leftarrow j+1$ ~~// a new block of two digits\\
6:\>\ELSE\\
7:\>\>$f_j \leftarrow f_{j+1}$ ~~// extending a \emph{possible} block from the beginning\\ 
\rule[6pt]{\textwidth}{0.45pt}
\end{algorithm}
\end{center}

\noindent
As a result of a {\it fix-borrow} the value of a number does not change.
Accompanying a {\it fix-borrow}, two digits are changed. In the
corresponding data structure, this results in performing a
\emph{split}, which involves no element comparisons.

The following pseudo-code summarizes the actions needed to decrease
the $i$th digit of a number by one.

\begin{center}
\begin{algorithm}
\rule[6pt]{\textwidth}{0.45pt}\\[-6pt]
\textbf{Algorithm} \textit{decrement}$(\sequence{d_0,d_1,\ldots,d_{\ell-1}}, i)$\\
\rule[6pt]{\textwidth}{0.45pt}\\[-6pt]
1:\>\ASSERT $0 \leq i \leq \ell-1$\\
2:\>\IF $d_i = 0$\\
3:\>\>\textit{fix-borrow}($\sequence{d_0,d_1,\ldots,d_{\ell-1}}$, $i$)
~~// either this fix is executed\\
6:\>$j \leftarrow f_i$\\
7:\>\IF $j < \ell-1$ \AND $d_j = 0$\\
8:\>\>\textit{fix-borrow}($\sequence{d_0,d_1,\ldots,d_{\ell-1}}$, $j$)\\
8:\>decrease $d_i$ by $1$\\
10:\>\IF $i < \ell-1$ \AND $d_i = 0$\\
11:\>\> \textit{fix-borrow}($\sequence{d_0,d_1,\ldots,d_{\ell-1}}$,
$i$) ~~// or this fix\\
\rule[6pt]{\textwidth}{0.45pt}
\end{algorithm}
\end{center}

\noindent
Using case analysis, it is not hard to verify that this operation maintains the
regularity of the representation.

In our application, the outcome of splitting a tree of rank $r+1$ 
is not always two trees of rank $r$ as assumed above. The
outcome can also be one tree of rank $r$ and another of rank $r+1$;
two trees of rank $r$ and a third tree of a smaller rank; or one
tree of rank $r$, one of rank $r-1$, and a third tree of a smaller
rank. We can handle the third tree, if any, by adding it 
below $t_1$ (executing an \emph{increment} just after the \emph{decrement}).  
The remaining two cases, where the {\it split} results in one
tree of rank $r$ and another of rank either $r+1$ or $r-1$, 
are pretty similar to the case where we have two trees of rank $r$. 
In the first case, we have detached a tree of rank $r+1$ and added a tree of rank $r$ 
and another of rank $r+1$; this case maintains the representation regular. 
In the second case, we have detached a tree of rank $r+1$ and added a tree of rank $r$ 
and another of rank $r-1$; after that, there may be three or four trees of rank $r-1$, 
and one {\it join} (executing a {\it fix-carry}) at rank $r-1$ may be necessary to make the representation regular.  
In the worst case, a \emph{fix-borrow} may require three element comparisons: 
two because of the extra addition (\emph{increment}) and one because of the extra \emph{join}.
A \emph{decrement}, which involves two fixes, may then require up to 
six element comparisons.

\section{Analysis of \Deletemin{}}

Let us analyse the number of element comparisons performed per
\Deletemin{}.  Recall that the maximum rank is bounded by $1.44 \lg
n$.  The sum of the digits for an integer obeying the extended regular
binary system is two times the maximum rank, i.e.~at most $2.88 \lg
n$.  It follows that the number of children of any node is at most
$2.88 \lg n$.  For an extended regular counter, an {\it increment} requires at
most two element comparisons and a {\it decrement} at most six element
comparisons. Recall also that the number of active violations cannot
be larger than $2.88 \lg n$, since if there were more than two
violations per rank a violation reduction would be possible. The
number of inactive violations is less than $\varepsilon \lg n$, where
$\varepsilon > 0$; its actual value depends on the speed we extend the
violation array and the number of new violations per operation. A
realistic value would be $\varepsilon = 1/10$.

First, when $t_2$ and its children are moved below $t_1$, at most
$2.88 \lg n + O(1)$ element comparisons are performed.  Second, when
finding the new minimum, during the scan over the children of $t_1$ at
most $2.88 \lg n$ element comparisons are made, and during the scan
over the violations of $t_1$ $(2.88 + \varepsilon)\lg n$ element
comparisons are made. Third, when all the inactive violations are made
active, $2\varepsilon \lg n$ element comparisons are performed.
Fourth, when the old children of $x$ are merged with those of the old
root, at most $2.88 \lg n$ element comparisons are performed.  Fifth,
when performing the necessary violation reductions, Case 4 turns out
to be the most expensive: (i) The minimum value of three elements is
to be determined (this requires two element comparisons). (ii) A {\it join}
requires one element comparison. (iii) Up to seven {\it increments} and up
to two {\it decrements} at the extended regular counter of $t_1$ may be necessary. In total, a
single violation reduction may require $29$ element comparisons, and
$(2.88 + \varepsilon) \lg n$ such reductions may take place (the
$\varepsilon$ accounts for the inactive violations of $t_1$).  To sum
up, the total number of element comparisons performed is at most $(4
\times 2.88 + \varepsilon + 29 (2.88 + \varepsilon)) \lg n + O(1)$,
which is at most $(95.04 + 30 \varepsilon) \lg n + O(1)$ element comparisons.

An extra optimization is still possible. During the violation reduction, instead 
of adding the subtrees of $x_2$ and $x_3$ below $t_1$ (an operation that requires two
element comparisons per subtree), we can join the two subtrees (an operation that 
requires one element comparison) using the fact that they have equal ranks. 
In addition, we can attach the resulting subtree in the place of the subtree that was rooted at $p_2$. 
Hence, we only need to remove one tree from below $t_1$ instead of two 
(an operation that requires six element comparisons per subtree).
In total, we perform at most one minimum-finding among three elements, 
two joins, five {\it increments} and one {\it decrement} per violation reduction. 
In accordance, a single violation reduction may require at most $20$ element comparisons. 
This reduces the number of element comparisons per \Deletemin{} to at most
$(4 \times 2.88 + \varepsilon + 20 (2.88 + \varepsilon)) \lg n + O(1) = (69.12 + 21 \varepsilon) \lg n + O(1)$ element comparisons.

\section{A Pointer-Machine Implementation}

In a nutshell, a pointer machine is a model of computation that only
allows pointer manipulations; arrays, bit tricks, or arithmetic
operations are not allowed.  In the literature, the definitions (and
names) of this type of machines differ, and it seems that all of these
machines do not have the same computational power. We want to
emphasize that the model considered here is a restricted one that may
be called a \emph{pure pointer machine} (for a longer discussion about
different versions of pointer machines and their computational power,
see \cite{Ben95}).

The memory of a pointer machine is a directed graph of cells, each
storing a constant number of pointers to other cells or to the cell
itself. All cells are accessed via special centre cells seen as
incarnations of registers of ordinary computers. The primitive
operations allowed include the creation of cells, destruction of
cells, assignment of pointers, and equality comparison of two
pointers. It is assumed that the elements manipulated can be copied
and compared in constant time, and that all primitive operations
require constant time.

In connection with the initial construction, we allocate two cells to represent the bits 0 and 1,
and three cells to represent the colours white, black, and red.  

Since we use ranks and array indices, we should be able to represent integers in
the range between 0 and the maximum rank of any priority queue.  
For this purpose, we use the so-called \emph{rank entities}; 
we keep these entities in a doubly-linked list, each entity representing one integer. 
The rank entities are shared by all the priority queues.
Every node can access the entity corresponding to its rank in
constant time by storing a pointer to this rank entity. 
With this mechanism, increments and decrements of ranks, rank copying, and equality
comparisons between two ranks can be carried out in constant time.
The maximum rank may only be increased or decreased by at most one with any operation. 
When the maximum rank is increased, a new entity is created and appended to the list. 
To decide when to release the last entity, the rank
entities are reference counted. This is done by associating with each
entity a list of cells corresponding to the nodes having this rank. 
When the last entity has no associated cells, the entity is released.
We also build a complete binary tree above the rank entities; we call
this tree the \emph{global index}.  Starting from a rank entity, the
bits in the binary representation of its rank are
obtained by accessing the global index upwards.

The main idea in our construction is to simulate a resizable array on
a pointer machine.  Given the binary representation of an array index,
which in our case is a rank, the corresponding location can be
accessed without relying on the power of a random-access machine, by
having a complete binary tree built above the array entries;
we call this tree the \emph{local index}. 
Starting from the root of the tree, we scan the bits in the representation of
the rank, and move either to the left or to the right until we reach
the sought entity that corresponds to an array entry; 
we call such entities the \emph{location entities}.   
The last location entity corresponds to the maximum node rank;
that is the rank of $t_2$ (or $t_1$ if $T_2$ is empty).

To facilitate the dynamization of indexes, we maintain four versions of
each; one is under construction, one is under destruction, and two are
complete (one of them is in use).  The sizes of the four versions are
consecutive powers of two, and there is always a version that is
larger and another that is smaller than the version in use.  At the
right moment, we make a specific version the current index, start
destroying an old version, and start constructing a new version.  To
accomplish such a change, each entity must
have three pointers, and we maintain a switch that indicates which
pointer leads to the index in use.  Index constructions and
destructions are done incrementally by distributing the work on the
operations using the index, a constant amount of work per operation.

To decide the right moment for index switching, we colour the leaves of 
every index. Consider an index with $2^k$ leaves. The rightmost leaf
is coloured red, the leaf that is at position $2^{k-2}$ from the left is coloured black, 
and all other leaves are coloured white.  
When the current index becomes full, i.e.~the red cell is met
by the last entity, we switch to the next larger index. When a
black cell is met by the last entity, we switch to the next smaller index. 

Let us now consider how these tools are used to implement our data
structure on a pointer machine.  One should be aware that the
following two operations require $O(\lg\lg n)$ worst-case time on a pointer machine: 
\begin{itemize}
\item When we access an array to record a violation or adjust the
  numeral system (in \Decrease{} or in \Meld{}), we use the global
  index to get the bits in the binary representation of the rank of
  the accessed entity and then a local index to find the corresponding
  location. We use the height of the local index to determine how many
  bits to extract from the global index. 
\item When we compare two ranks (in \Meld{}), we extract the bits from
  the global index and perform a normal comparison of two bit sequences. 
\end{itemize}
When two priority queues are melded, we keep the local index of the 
violation structure of one priority queue, and move the other to the garbage pile. 
The unnecessary regular counters are also moved to the garbage pile. As pointed out earlier,
incremental garbage collection ensures that the amount of space used
by each priority queue is still linear in its size.  Because of the rank
comparison, the update of the violation structure of $t_1$, and the
update of the regular counter of $t_2$, \Meld{} takes $O(\lg\lg n)$ worst-case
time. Also \Decrease{} involves several array accesses, so its
worst-case cost becomes $O(\lg\lg n)$.

Lastly, \Deletemin{} must be implemented carefully to avoid a slowdown
in its running time. This is done by bypassing the local indexes and
creating a one-to-one mapping between the rank entities and the
location entities maintained for the violation structure of $t_1$ and
the regular counters of $t_1$ and $t_2$.  First, the rank and location
entities are simultaneously scanned, one by one, and a pointer to the
corresponding location entity is stored at each rank entity.
Hereafter, any node keeping a pointer to a rank entity can access the
corresponding location entity with a constant delay.  This way,
deletions can be carried out asymptotically as fast as on a
random-access machine.

\section{Summarizing the Results}

We have proved the following theorem.

\begin{theorem}
There exists a priority queue that works on a random-access machine
and supports the operations \Findmin{},
\Insert{}, \Decrease{}, and \Meld{} in constant worst-case time; and
\Delete{} and \Deletemin{} in $O(\lg n)$ worst-case time, where $n$ is
the number of elements in the priority queue.  The amount of space used by the
priority queue is linear.
\end{theorem}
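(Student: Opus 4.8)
The proof is a matter of assembling the pieces already established in §2–§6 and verifying that each priority-queue operation respects the invariants and the rank rules. The plan is to proceed in four stages.

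\textbf{Stage 1: Correctness of the structural invariants.} First I would verify that each of the six operations, as described in §3, maintains the three invariants (minimum at $t_1$; second-smallest element at $t_2$, a child of $t_1$, or a violation of $t_1$; violation lists of size $O(\lg n)$) together with the rank rules (a)–(c) and the numeral-system constraints on the rank sequences of $t_1$ and $t_2$. For \Findmin{} and \Insert{} this is immediate. For \Decrease{}, \Meld{}, \Deletemin{}, and \Delete{} one has to check case by case that (i) every subtree removed from below $t_1$ or $t_2$ is accompanied by the corresponding \emph{decrement} on the relevant extended regular counter, that every subtree added below $t_1$ or $t_2$ is accompanied by an \emph{increment}, and that these counter operations keep the representation regular (by the analysis of §4); and (ii) that each of the violation-reduction Cases 1–5 in §3 genuinely strictly reduces the number of active violations while never creating a violation below $t_1$ — this follows from the two key properties stated at the end of the §3 violation-reduction subsection, namely that there is no tree $T_2$ during a reduction (so $t_1$ has maximal rank and a subtree of any desired rank can be extracted) and that $t_1$ holds the global minimum (so its children are non-violating).

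\textbf{Stage 2: Time bounds.} For the constant-time operations, I would observe that \Findmin{} and \Insert{} touch only $t_1$ and perform at most one counter \emph{increment}, which by §4 costs $O(1)$ digit changes and hence $O(1)$ \emph{join}/\emph{split} operations. \Decrease{} performs $O(1)$ pointer updates, at most one counter operation to move a bounded number of subtrees from below $t_2$ to below $t_1$, insertion into the violation structure, and one violation reduction — all $O(1)$. \Meld{} similarly moves $O(1)$ subtrees, performs $O(1)$ counter operations and two violation reductions, and dismisses one violation array and the redundant counters in $O(1)$ (using the array-dismissal assumption or the incremental garbage collection described in §3). For \Deletemin{} and \Delete{}, Lemma~\ref{log} bounds the rank and the number of children of $t_1$ by $O(\lg n)$, the third invariant bounds the number of violations to be re-examined by $O(\lg n)$, and the detailed count of §5 shows the total cost is at most roughly $70 \lg n$ element comparisons plus lower-order terms; \Delete{} reduces to \Deletemin{} after one swap.

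\textbf{Stage 3: Space bound.} I would argue by induction over the sequence of operations that the total space is linear in the current number of elements. Each node uses $O(1)$ words (an element, a rank, and six pointers). Each priority queue maintains one violation structure and $O(1)$ extended regular counters; since each counter and each violation array has length $O(\lg n)$ and refers only to existing nodes, and since each active violation of rank $r$ occupies $O(1)$ space while the violation array has size $O(\lg n)$, the auxiliary structures contribute $O(\lg n)$ per priority queue, which is absorbed by the $\Omega(n)$ lower bound on the node count (a priority queue with $n$ elements has $n$ nodes, and $n \geq \lg n$). When arrays cannot be freed in $O(1)$ time, the garbage pile grows by $O(\lg n)$ per dismissed structure but shrinks by $\Theta(1)$ per operation, and since at most one structure is dismissed per \Meld{}/\Deletemin{} the pile size stays $O(n)$; this is the induction claim stated informally in §3.

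\textbf{Main obstacle.} The delicate part is Stage~1, specifically checking that the violation-reduction routines of Cases 1–5 are exhaustive — that the case split on $\rank(s_1)\geq\rank(s_2)\geq\rank(s_3)$ together with the restriction (after Case~1) that each $x_i$ is the last or second-last child of its parent really covers every configuration — and that in each case the bookkeeping on the rank sequences of the affected parents $p_1,p_2,p_3$ (changing each parent's rank to one more than that of its new last child, detaching singleton group members, and reinserting a violating subtree of the old rank of $p_j$) leaves rank rule (c) intact and does not secretly rely on a sibling of rank $r+1$ being non-violating (the pitfall flagged at the start of §3, which disqualifies the textbook cleaning transformation). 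Once that case analysis is granted, everything else is routine verification against the invariants and the counts already tabulated in §5.
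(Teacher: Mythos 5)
Your proposal matches the paper's own treatment: the theorem is stated in \S 6 with no separate proof beyond the construction of \S 2--\S 3, the counter implementation of \S 4, and the comparison count of \S 5, and your four stages assemble exactly these ingredients (including Lemma~1, the three invariants, the key properties enabling the violation reductions, and the incremental garbage-collection induction for the space bound). You also correctly single out the genuinely delicate point---the exhaustiveness of the violation-reduction cases and the avoidance of the cleaning transformation---so the approach is essentially the same as the paper's.
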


For our structure, the number of element comparisons performed by
\Delete{} is at most $\beta \cdot \lg n$, where $\beta$ is around
seventy. For the data structure in \cite{Bro96}, some
implementation details are not specified, making such analysis depend on
the additional assumptions made.  According to our rough estimates,
the constant factor $\beta$ for the priority queue in \cite{Bro96} is much more than
one hundred. We note that for both data structures further
optimizations may still be possible.

We summarize our storage requirements as follows. Every node stores an
element, a rank, two sibling pointers, a last-child pointer, a pointer
to its violation list, and two pointers for the violation list it may
be in. The violation structure and the regular counters require a logarithmic
amount of space.  In \cite{Bro96}, in order to keep external
references valid, elements should be stored indirectly at nodes, as
explained in \cite[Chapter 6]{CLRS09}.  The good news is that we save
four pointers per node in comparison with \cite{Bro96}. The bad news
is that, for example, binomial queues \cite{Vui78} can be implemented
with only two pointers per node.

We have also proved the following theorem.

\begin{theorem}
There exists a priority queue that works on a pure pointer machine and
supports the operations \Findmin{} and \Insert{} in constant
worst-case time, \Decrease{} and \Meld{} in $O(\lg\lg n)$ worst-case
time, and \Delete{} and \Deletemin{} in $O(\lg n)$ worst-case time,
where $n$ is the number of elements stored in the resulting priority
queue.  The amount of space used by the priority queue is linear.
\end{theorem}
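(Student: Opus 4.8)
The plan is to show that all the constant-time and $O(\lg n)$-time bounds established earlier for the random-access model carry over to the pure pointer machine, except that the array accesses hidden inside \Decrease{} and \Meld{} now cost $O(\lg\lg n)$ because index navigation replaces $O(1)$ address arithmetic. The proof is essentially an accounting argument layered on top of the construction of \S 6: I would fix the simulation primitives (rank entities with the global index, location entities with the local index, and the four-versions scheme for resizable arrays) and then revisit each priority-queue operation in turn, charging its cost against these primitives.

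First I would verify the supporting infrastructure. The rank entities form a doubly-linked list with a complete binary tree (the global index) above them; since the maximum rank changes by at most one per operation, a new leaf is appended or the last is released in $O(1)$ amortized time, and---by the same four-versions trick used for local indexes---in $O(1)$ worst-case time by distributing the rebuild incrementally. Reference counting via per-entity cell lists guarantees that release happens exactly when no node of that rank remains; this keeps the total size of all indexes and the garbage pile linear, which I would confirm by the induction sketched in \S 3. Climbing the global index from a rank entity yields the $\lceil\lg(\text{max rank})\rceil = O(\lg\lg n)$ bits of that rank. Navigating a local index of height $h$ downward, consuming $h$ bits supplied from the global index, reaches the desired location entity in $O(h) = O(\lg\lg n)$ time; the red/black leaf colouring triggers a switch to the next larger or next smaller version at the right moment so that $h$ stays $\Theta(\lg\lg n)$ and there is always a larger and a smaller complete version available. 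This establishes the $O(\lg\lg n)$ cost of a single ``array access''.

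Next I would walk through the operations. \Findmin{} and \Insert{} touch only $t_1$ and a constant number of sibling/child pointers and perform a constant number of numeral-system digit changes, each of which is a local index access into the counter of $t_1$; hence $O(1)$ on a pointer machine as well, but one must note that \Insert{} may force the global index and the counter's local index to grow by one leaf, which is the incremental $O(1)$ work described above. \Decrease{} and \Meld{} each perform only a constant number of violation-structure and counter updates, but each such update is now an $O(\lg\lg n)$ index access; \Meld{} additionally compares $\rank(t_1)$ against the competing roots, which means comparing two $O(\lg\lg n)$-bit sequences extracted from the global index, again $O(\lg\lg n)$. Dismissing the loser's local index and the superfluous counters is done by moving them to the garbage pile and releasing a constant amount per future operation. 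So both operations run in $O(\lg\lg n)$ worst-case time. The delicate case is \Deletemin{} (and \Delete{}, which reduces to it after a swap): naively it would incur $\Theta(\lg n)$ index accesses of cost $\Theta(\lg\lg n)$ each, giving $\Theta(\lg n \lg\lg n)$. I would invoke the fix from the end of \S 6: at the start of \Deletemin{} scan the rank-entity list and the location-entity lists of the violation structure of $t_1$ and of the counters of $t_1$ and $t_2$ in lockstep, storing in each rank entity a direct pointer to its matching location entity. This costs $O(\lg n)$ once, after which every subsequent access-by-rank during that \Deletemin{}---recording inactive violations, merging children, performing the $O(\lg n)$ violation reductions with their constant numeral-system work each---follows a single pointer in $O(1)$, so the whole operation is $O(\lg n)$, matching the RAM bound up to the constant factor.

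The main obstacle I expect is the worst-case (not amortized) bookkeeping of the resizable-array simulation: one must argue that the incremental construction and destruction of the four index versions, the incremental growth of the global index, and the incremental garbage collection can all be interleaved so that every single operation pays only $O(\lg\lg n)$ (or $O(1)$ where claimed) in the worst case, never a lump-sum rebuild---and that the colouring thresholds (rightmost leaf red, the $2^{k-2}$-th leaf black) leave enough slack that a switch is always completable before the next one is demanded. Checking that the sizes of the simultaneously live structures stay linear is the accompanying invariant, provable by the straightforward induction already indicated in the text. Everything else is a routine re-reading of the operations of \S 3 with ``array access'' reinterpreted as ``index navigation''. \qed
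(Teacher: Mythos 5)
Your proposal follows essentially the same route as the paper's Section 6: rank entities with a global index, local indexes with the four-versions and red/black colouring scheme for resizing, $O(\lg\lg n)$ per array access in \Decrease{} and \Meld{}, and the lockstep rank-to-location scan at the start of \Deletemin{} to restore the $O(\lg n)$ bound. The only nit is your phrasing that each digit change in \Insert{} is ``a local index access'' --- the increment at position $0$ and its fixes are reached via the list head and the forward pointers, not by index navigation, which is why \Insert{} stays $O(1)$ rather than $O(\lg\lg n)$.
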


\section{Conclusions}

We showed that a simpler data structure achieving the same asymptotic
bounds as Brodal's data structure \cite{Bro96} exists.  We were
careful not to introduce any artificial complications when presenting
our data structure.  Our construction reduces the constant factor
hidden behind the big-Oh notation for the worst-case number of element
comparisons performed by \Deletemin{}.  Theoretically, it would be
interesting to know how low such factor can get.

We summarize the main differences between our construction and that in
\cite{Bro96}.  On the positive side of our treatment:
\begin{itemize}

\item We use a standard numeral system with fewer digits (four instead
  of six).  Besides improving the constant factors, this allows for
  the logical distinction between the operations of
  the numeral system and other operations.

\item We use normal joins instead of three-way joins, each involving
  one element comparison instead of two.

\item We do not use parent pointers, except for the last children.
  This saves one pointer per node, and allows swapping of nodes in
  constant time. Since node swapping is possible, elements can be
  stored directly inside nodes. This saves two more pointers per node.

\item We gathered the violations associated with every node in one
  violation list, instead of two. This saves one more pointer per node.

\item We bound the number of violations by restricting their total
  number to be within a threshold, whereas the treatment
  in \cite{Bro96} imposes an involved numeral system that constrains
  the number of violations per rank.

\end{itemize}
\noindent
On the negative side of our treatment:
\begin{itemize}
\item We check more cases within our violation reductions.

\item We have to deal with larger ranks; the maximum rank may go up to
  $1.44 \lg n$, instead of $\lg n$.
\end{itemize}

It is a long-standing open issue how to implement a priority queue on
a pointer machine such that all operations are performed in optimal
time bounds.  Fibonacci heaps are known to achieve the optimal bounds
in the amortized sense on a pointer machine \cite{KT08}. Meldable
priority queues described in \cite{Bro95} can be implemented on a pointer
machine, but \Decrease{} requires $O(\log n)$ time.  Fat heaps
\cite{KST02} can be implemented on a pointer machine, but \Meld{} is
not supported.  In \cite{EJK10c}, a pointer-machine implementation of
run-relaxed weak heaps is given, but \Meld{} requires $O(\lg n)$ time
and \Decrease{} $O(\lg\lg n)$ time.  We introduced a pointer-machine
implementation that supports \Meld{} and \Decrease{} in $O(\lg\lg n)$
time so that no arrays, bit tricks, or arithmetic operations are used.
We consider the possibility of constructing worst-case optimal
priority queues that work on a pointer machine as a major open question.

\section*{Acknowledgements}

We thank Robert Tarjan for motivating us to simplify Brodal's
construction, Gerth Brodal for taking the time to explain some details
of his construction, and Claus Jensen for reviewing this manuscript
and teaming up with us for six years in our research on the comparison
complexity of priority-queue operations.

\newpage
\appendix
\section{Completing the Story}

This appendix was written for our non-expert readers, to give them
insight into the technical details---of which there are still
many---skipped in the main body of the text.  Our hope is that after
reading this appendix the reader would get the details 
used for obtaining a worst-case optimal priority queue.

\subsection{Resizable Arrays}

One complication encountered in our construction is the dynamization
of the arrays maintained. At some point, an array may be too small to
record the information it is supposed to hold. Sometimes it may also
be necessary to contract an array so that its size is proportional to
the amount of recorded information.

A standard solution for this problem is to rely on doubling, halving,
and incremental copying. Each array has a \emph{size}, i.e.~the number
of objects stored, and a \emph{capacity}, i.e.~the amount of space
allocated for it, measured in the number of objects too. The
representation of an array is composed of up to two memory segments, and
integers indicating their size and capacity. Let us call
the two segments $X$ and $Y$; $X$ is the main structure, and $Y$ is the
copy under construction, which may be empty. The currently stored
objects are found in at most two consecutive subsegments, one at the
beginning of $X$ and another in the middle of $Y$. Initially,
$\mathit{capacity}(X) = 1$ and $\mathit{size}(X) = 0$. With this organization,
the array operations are performed as follows:

\begin{description}
\item[\Access{}($A$, $i$).] To access the object with index $i$ in
  array $A$, return $X[i]$ if $i < \mathit{size}(X)$;
  otherwise, return $Y[i]$. (No out-of-bounds index checking is done.)

\item[\Grow{}($A$).] 
  If $Y$ does not exist and $\mathit{size}(X) < \mathit{capacity}(X)$, 
  increase $\mathit{size}(X)$ by one and stop. 
  If $Y$ does not exist and $\mathit{size}(X) = \mathit{capacity}(X)$, allocate space 
  for $Y$ where $\mathit{capacity}(Y) = 2 \cdot \mathit{capacity}(X)$
  and set $\mathit{size}(Y) = \mathit{size}(X)$.   
  Copy one object from the end of $X$ to the corresponding position in $Y$, 
  decrease $\mathit{size}(X)$ by one and increase $\mathit{size}(Y)$ by one. 
  If $\mathit{size}(X)$ is zero, release the array $X$ and rename $Y$ as $X$.

\item[\Shrink{}($A$).] 
  If $Y$ does not exist and $\mathit{size}(X) > 1/4\cdot \mathit{capacity}(X)$, 
  decrease $\mathit{size}(X)$ by one and stop. 
  If $Y$ does not exist and $\mathit{size}(X) = 1/4\cdot \mathit{capacity}(X)$, 
  allocate space for $Y$ where $\mathit{capacity}(Y) = 1/2 \cdot\allowbreak{} \mathit{capacity}(X)$
  and set $\mathit{size}(Y) = \mathit{size}(X)$. 
  Copy two (or one, if only one exists) objects from the end of $X$ to the corresponding positions in $Y$,
  decrease $\mathit{size}(X)$ as such and decrease $\mathit{size}(Y)$ by one. 
  If $\mathit{size}(X)$ is zero, release the array $X$ and rename $Y$ as $X$.
\end{description}

Because of the speed objects are copied, a copying process can always
be finished before it will be necessary to start a new copying
process.  Clearly, in connection with each operation, the amount of
work done is $O(1)$. Additionally, since only a constant fraction of
the allocated memory is unused, the amount of space used is
proportional to the number of objects stored in the array.

\subsection{The Priority-Queue Operation Repertoire}

We assume that the atomic components of the priority queues manipulated are
nodes, each storing an element. Further, we assume that the memory
management related to the nodes is done outside the priority queue. For
\Insert{}, the user must give a pointer to a node as its argument; the
node is then made part of the data structure. The ownership of the
node is returned back to the user when it is removed from the data
structure. After such removal, it is the user's responsibility to take
care of the actual destruction of the node.

A meldable (minimum) priority queue should support the following operations. All
parameters and return values are pointers or references to
objects. When describing the effect of an operation, we write (for the
sake of simplicity) ``object $x$'' instead of ``the object pointed to
by $x$''.
\begin{description}
\item[\Construct{}().] Create and return a new priority queue that
  contains no elements.

\item[\Destroy{}($Q$).] Destroy priority queue $Q$ under the
  precondition that the priority queue contains no elements.

\item[\Findmin{}($Q$).] Return the node of a minimum
  element in priority queue $Q$.

\item[\Insert{}($Q$, $x$).] Insert node $x$ (storing an
  element) into priority queue $Q$.

\item[\Decrease{}($Q$, $x$, $v$).] Replace the element
  stored at node $x$ in priority queue $Q$ with element $v$ such
  that $v$ is not greater than the old element.

\item[\Delete{}($Q$, $x$).]  Remove node $x$ from priority queue $Q$.

\item[\Deletemin{}($Q$).] Remove and return the node of a minimum
  element from priority queue $Q$. This operation has the same effect as
  \Findmin{} followed by \Delete{} using its return value as argument.

\item[$\Meld{}(Q_1, Q_2)$.] Move the nodes from priority queues $Q_1$
  and $Q_2$ into a new priority queue and return that priority queue.
  Priority queues $Q_1$ and $Q_2$ are dismissed by this operation.
\end{description}
Observe that it is essential for the operations \Decrease{} and
\Delete{} to take (a handle to) a priority queue as one of their arguments. Kaplan
et al.~\cite{KST02} showed that, if this is not the case, the stated optimal
time bounds are not achievable.

\subsection{Operations in Pseudo-Code}

To represent a priority queue, the following variables are maintained:
\begin{itemize}
\item a pointer to $t_1$ (the root of $T_1$)
\item a pointer to $t_2$ (the root of $T_2$)
\item a pointer to the beginning of the list of active violations of
  $t_1$
\item a pointer to the beginning of the list of inactive violations of
  $t_1$
\item a pointer to the violation structure recording the active violations of $t_1$
\item a pointer to the regular counter keeping track of the children
  of $t_1$; both the rank sequence $\sequence{d_0, d_1,\ldots,
    d_{\mathit{rank}(t_1)}}$ and pointers to the children are
  maintained
\item a pointer to the regular counter keeping track of the children of $t_2$
\end{itemize}

\noindent
Next, we describe the operations in pseudo-code. For this, we use
itemized text, and consciously avoid programming-language details.

\subsection*{\Construct{}()}

\begin{itemize}
\item Initialize all pointers to point to $\mathit{null}$.
\item Create an empty violation array for $t_1$.
\item Create an empty regular counter for $t_1$ and $t_2$.
\end{itemize}

\subsection*{\Destroy{}($Q$)}

\begin{itemize}
\item Destroy the regular counters of $t_1$ and $t_2$.
\item Destroy the violation array for $t_1$.
\item Raise an exception if either $T_1$ or $T_2$ is not empty.
\end{itemize}

\subsection*{\Findmin{}($Q$)}
\begin{itemize}
\item Return $t_1$.
\end{itemize}

\subsection*{\Insert{}($Q$, $x$)}

\begin{itemize}
\item If the element at node $x$ is smaller than that at 
  $t_1$,
 swap $x$ and $t_1$ (but not their violation lists).
\item Make $x$ the child of $t_1$.
\item Update the regular counter of $t_1$ (by performing an {\it increment} at position $0$).
\end{itemize}

\subsection*{\Meld{}($Q_1, Q_2$)}

\begin{itemize}
\item Let the involved roots be $t_1$ and $t_2$ (for $Q_1$), and $t'_1$
  and $t'_2$ (for $Q_2$). Assume without loss of generality that
  the element at $t_1$ is smaller than that at $t'_1$.
\item Make $t_1$ the root of the new first tree of the resulting priority queue.
\item Dismiss the violation array of $t'_1$.
\item Let $s$ denote the node of the highest rank among the roots
  $t_1$, $t_2$, $t'_1$, and $t'_2$.
\item If $s = t_1$
 \begin{itemize}
  \item Move the other roots as the children of $t_1$. 
  \item Update the regular counter of $t_1$ accordingly.
 \end{itemize}
\item Otherwise:
 \begin{itemize}
  \item Make $s$ the root of the new second tree of the resulting priority queue.
  \item Move the other (at most two) roots below $s$, and make these roots violating 
  (by adding them to the violation structure of $t_1$).
  \item Update the regular counter of $s$ accordingly.
  \item Perform two violation reductions, if possible.
 \end{itemize}
\item Dismiss the regular counters of the roots moved below $t_1$ and $s$.
\item Extend the violation array of $t_1$ by $O(1)$ locations, if necessary.
\item If $s \neq t_1$, repeat $O(1)$ times: 
   \begin{itemize}
    \item Move a child of $\mathit{rank}(t_1)$ from below $s$ to below $t_1$.
    \item  Update the regular counter of $t_1$ (by performing an {\it increment} at position $\mathit{rank}(t_1)$).
    \item  Update the regular counter of $s$ (by performing a {\it decrement} at position $\mathit{rank}(t_1)$).
   \end{itemize}
\item If $s \neq t_1$ and $\mathit{rank}(s) \leq \mathit{rank}(t_1)$
	\begin{itemize} 
		\item  Move the whole tree of $s$ below $t_1$.
	  \item  Update the regular counter of $t_1$ (by performing an {\it increment} at position $rank(s)$).
	  \item  Dismiss the regular counter of $s$.
  \end{itemize}
\end{itemize}

\subsection*{\Decrease{}($Q$, $x$, $v$)}

\begin{itemize}
\item Replace the element at node $x$ with element $v$.
\item If the element at $x$ is smaller than that at $t_1$,
 swap $x$ and $t_1$ (but not their violation lists).
A swap should be followed by an update of the external
pointers referring to $x$. If a swap was executed and
 if any of the arrays (regular counters or violation array) had a pointer
  to $x$, update this pointer at $\mathit{rank}(x)$ to point to $t_1$
  instead. 
\item If $x$ is $t_1$, $t_2$, or a child of $t_1$,
 stop.
\item If $x$ was violating,
 remove it from the violation structure where it was in.
\item Make $x$ violating by adding it to the violation structure of
  $t_1$ (either as an active or an inactive violation depending on
  its rank).
\item Perform one violation reduction, if possible.
\item Extend the violation array of $t_1$ by $O(1)$ locations, if necessary.
\item If $T_2$ exists, repeat $O(1)$ times: 
   \begin{itemize}
    \item Move a child of $\mathit{rank}(t_1)$ from below $t_2$ to below $t_1$.
    \item  Update the regular counters $t_1$ (by performing an {\it increment} at position $\mathit{rank}(t_1)$).
    \item  Update the regular counter of $t_2$ (by performing a {\it decrement} at position $\mathit{rank}(t_1)$).
   \end{itemize}
\item If $T_2$ exists and $\mathit{rank}(t_2) \leq \mathit{rank}(t_1)$
	\begin{itemize} 
		\item  Move the whole tree $T_2$ below $t_1$.
	  \item  Update the regular counter of $t_1$ (by performing an {\it increment} at position $rank(t_2)$).
	  \item  Dismiss the regular counter of $t_2$.
  \end{itemize}  
\end{itemize}

\subsection*{\Deletemin{}($Q$)}

\begin{itemize}
\item Merge $t_2$ (as a single node) and all its subtrees with the children of $t_1$,
while extending and updating the regular counter of $t_1$ accordingly.
\item Dismiss the regular counter of $t_2$.
\item Determine the new minimum by scanning the children of
$t_1$ and all violations in the violation lists of $t_1$. Let $x$ be
  the node containing the new minimum.
\item If $x$ is a violation node
 \begin{itemize}
   \item Remove a child of $rank(x)$ from below $t_1$.
   \item Update the regular counter of $t_1$ accordingly.
   \item Make the detached node violating, and attach it in place of $x$.
 \end{itemize}  
\item Otherwise
 \begin{itemize}
   \item Remove $x$ from below $t_1$.
   \item Update the regular counter of $t_1$ accordingly.
 \end{itemize}
\item Merge the children of $x$ with those of $t_1$,
			and update the regular counter accordingly.
\item Append the violation list of $t_1$ to that of $x$.
\item Make the violation array large enough, and add all
  violations of $x$ that are not already there to the violation array.
\item Release $t_1$ and move $x$ to its place.
\item Perform as many violation reductions as possible.
\end{itemize}

\subsection*{\Delete{}($Q$, $x$)}
\begin{itemize}
\item Swap node $x$ and $t_1$.
\item If any of the arrays (regular counters or violation array) had a pointer
  to $x$, update this pointer at $\mathit{rank}(x)$ to point to $t_1$
  instead. 
\item Make the current node $x$ violating.
\item Remove the current root of $T_1$ as in \Deletemin{}. 
\end{itemize}

\end{document}